\theoremstyle{bthm}
\newtheorem{thm}{Theorem}[section]
\newtheorem{lemma}{Lemma}[section]
\newtheorem{definition}[thm]{Definition}
\date{}
\begin{document}
\title{Equiseparability on Terminal Wiener Index and Distances}
\author{Sulphikar A}
\maketitle
\begin{center}
\textbf{ Department of Computer Science and Engineering\\ National Institute of Technology , Trichy\\ Tamilnadu INDIA}
\end{center}
\begin{abstract}
Teminal Wiener index is one of the commonly used topological index in mathematical chemistry. If two or more chemical compounds have the same  terminal Wiener index then they will have similar physico-chemical properties. In this work we propose a new method  for constructing equiseparable  trees w.r.t  terminal Wiener index. The existing method is based on the number of pendent vertices but the proposed  method is based on distance parameters. 
\end{abstract}
\textbf{keywords}\\
Terminal Wiener index, Wiener index, Distance in graphs, Equiseparability.
\section{Introduction}
Given a graph $G=(V,E)$, the Wiener index  $W(G)$ of $G$ is defined  as
\begin{equation}\label{eq:1}
W(G)=\sum\limits_{(u,v)\in V} d(u,v).
\end{equation}
 where $d(u,v)$ is the distance between vertices $u$ and $v$ of $G$.\cite{chen,dobr}\\
Let $T$ be an $n$-vertex tree and $e=uv$ an edge of $T$. Denote by $n_{u}(e|T)$(resp. $n_{v}(e|T)$) the number of vertices of $T$ lying on one side of the edge $e$, closer to vertex $u$(resp. $v$). Then $W(T)$ also satisfies the following relation:
\begin{equation}\label{eq:2}
 W(T)=\sum\limits_{e=uv} n_{u}(e|T)n_{v}(e|T)
\end{equation}
 Let $T$ be an $n$-vertex tree with $k$ pendent vertices. Then the terminal Wiener index $TW(T)$ \cite{zery,gut,surv} of $T$ is defined as the sum of the distances between  all pairs of pendent vertices of $T$. That is
\begin{equation}\label{eq:3}
 TW(T)=\sum\limits_{1 \leq i <j \leq k} d(v_{i},v_{j})
\end{equation}
   where $v_{i}$,$v_{j}$ are pendent vertices in  $T$. Let $P(T)$ be the set of pendent vertices of $T$. 
	Terminal Wiener index can also be calculated as
\begin{equation}\label{eq:4}
TW(T)=\frac{1}{2}\sum\limits_{u\in P(T)} d^{+}(u)
\end{equation}
where $d^{+}(u)$  is the sum of the distances between $u$ and pendent vertices of $T$.
For an edge $e=uv$, $p_{u}(e)$(resp. $p_{v}(e)$) denotes the number of vertices closer to $u$(resp. $v$) than $v$(resp. $u$). The numbers  $p_{u}(e)$ and $p_{v}(e)$  satisfy the relation  
\begin{equation}\label{eq:5}
d^{+}(u)-d^{+}(v)=p_{v}(e)-p_{u}(e)
\end{equation}
Let $P_{n}$ be a path on $n$ vertices. It is easy to see that $d^{+}(v)=n-1$ for any $v \in V(P_{n})$. Let $uv$ be a pendent edge of a tree $T$ with $deg(u)=1$. Then $d^{+}(u)-d^{+}(v)=p-2$. For a star $S_{n}$ on $n$ vertices, $d^{+}(v)=n-1$ if $v$ is the nonpendent vertex and $d^{+}(v)=2(n-2)$ otherwise.
\begin{definition}\cite{xia}
Assuming that $p_{u}(e) \leq p_{v}(e)$, two trees T' and T'' of order $n$  with the same number of pendent vertices are said to be equiseparable w.r.t  terminal Wiener index if their edges $e'_{1},e'_{2},....e'_{n-1}$ and $e''_{1},e''_{2},....e''_{n-1}$ can be labelled such that the equality $p_{u}(e'_{i}|T')=p_{u}(e''_{i}|T'')$ holds for all $i=1,2,....n-1$. $p_{u}(e'_{i}|T')$ denotes $p_{u}(e'_{i})$  in $T'$.
\end{definition}
The concept of Wiener index has been extensively used in chemistry. If two or more chemical compounds  have equiseparable trees, then those compounds will have similar physico-chemical properties which can not be distinguished by means of indices like Wiener index and terminal Wiener index\cite{xia}.
Terminal distance matrices were used in the mathematical modelling of proteins and genetic codes\cite{hor}.\\
One way of generating equiseparable  graphs  with respect to  Wiener  index can be found  in \cite{aado}. Construction of a class of   equiseparable trees with respect to terminal Wiener  index can be found in \cite{xia}. A detailed study of equiseparable molecules with respect to Wiener index can be found in \cite{gut1,surv,xia2}. \\ 
It is well known that   almost all trees have a terminal equiseparable mate.\cite{xia}\\
\noindent\underline{Our contribution:}\\
In this work, we propose  a new method to generate trees having the same terminal Wiener index. This method is based on distances between vertices. Using this method it is possible to construct many tree pairs with same terminal Wiener index.\\

This paper is organized as follows:  In Section $2$ we  explain some methods  for construction of equiseparable  tree pairs with respect to  terminal Wiener index only. In Section $3$ we  explain some methods  for constructing large class  of trees that are equiseparable  with respect to terminal Wiener index. 
\section{Construction of equiseparable  trees with respect to terminal Wiener Index}
 We use the notation $d^{+}(u)$ to denote the sum of distances between  $u$ and the pendent vertices of $T$ where $u \in V(T)$.
We start with a lemma given below.
\begin{lemma}\label{lemma1}
Let  $T_{a}$ and $T_{b}$ be two trees of orders   $n_{1}$ and $n_{2}$   respectively.  Let $u \in V(T_{a})$, $v \in V(T_{b})$ and  $T_{a}.T_{b} (u,v)$ denotes the new tree obtained from $T_{a}$ and $T_{b}$ by identifying $u$ and $v$. Let the number of pendent vertices  in $T_{a}$ and $T_{b}$ be $l_{1}$ and $l_{2}$ respectively.\\
If both $u$ and $v$ are nonpendent vertices, then 
\begin{equation}\label{eq:6}
TW(T_{a}.T_{b} (u,v))=TW(T_{a})+TW(T_{b})+l_{2}d^{+}(u)+ l_{1}d^{+}(v). 
\end{equation}
\end{lemma}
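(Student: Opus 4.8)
The plan is to exploit the additivity of the terminal Wiener index over unordered pairs of pendent vertices, together with the observation that identifying two nonpendent vertices produces a cut vertex through which every ``cross'' path must pass. First I would pin down the pendent-vertex set of the amalgam $T_{a}.T_{b}(u,v)$. Since $u$ and $v$ are nonpendent we have $\deg(u)\geq 2$ and $\deg(v)\geq 2$, so the identified vertex $w$ has degree $\deg(u)+\deg(v)\geq 4$ and is itself nonpendent; conversely no vertex of $P(T_{a})$ or $P(T_{b})$ is merged into $w$, so each such vertex keeps degree $1$. Hence the pendent vertices of the amalgam are exactly $P(T_{a})\cup P(T_{b})$, a disjoint union of sizes $l_{1}$ and $l_{2}$, and $w$ is a cut vertex separating the $T_{a}$-part from the $T_{b}$-part.

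Next I would partition the defining sum \eqref{eq:3} for $T_{a}.T_{b}(u,v)$ according to the location of the two pendent endpoints: (i) both in $P(T_{a})$, (ii) both in $P(T_{b})$, and (iii) one in each. For group (i), the unique path between two vertices of $T_{a}$ never leaves $T_{a}$ (a tree has a unique $p$--$q$ path, and the identification does not alter it), so those distances equal their values in $T_{a}$ and the group contributes precisely $TW(T_{a})$; by symmetry group (ii) contributes $TW(T_{b})$.

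The substantive step is group (iii). For $p\in P(T_{a})$ and $q\in P(T_{b})$, the unique $p$--$q$ path must cross the cut vertex $w$, so $d(p,q)=d_{T_{a}}(p,u)+d_{T_{b}}(v,q)$. Summing over all mixed pairs and splitting the two summands gives
\[
\sum_{p\in P(T_{a})}\sum_{q\in P(T_{b})} d(p,q)
= l_{2}\sum_{p\in P(T_{a})} d_{T_{a}}(p,u) + l_{1}\sum_{q\in P(T_{b})} d_{T_{b}}(v,q)
= l_{2}\,d^{+}(u) + l_{1}\,d^{+}(v),
\]
where each inner sum is recognised as $d^{+}(u)$ computed in $T_{a}$ (respectively $d^{+}(v)$ in $T_{b}$), and the factors $l_{2}$ and $l_{1}$ count the free choices of the other endpoint. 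Adding the three contributions yields \eqref{eq:6}.

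I do not expect a genuine obstacle here; the proof is essentially bookkeeping. The only points requiring care are the clean identification of the pendent-vertex set of the amalgam and the additive decomposition of cross-pair distances, both of which rest on $w$ being a cut vertex of a tree and hence on the uniqueness of paths. Getting those two facts stated precisely is what makes the rest routine.
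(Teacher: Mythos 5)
Your proof is correct and complete: the paper actually states Lemma \ref{lemma1} without any proof, so there is nothing to compare against, and your argument (pendent set of the amalgam is the disjoint union $P(T_{a})\cup P(T_{b})$ since $u,v$ are nonpendent, then split the sum in \eqref{eq:3} into within-$T_{a}$, within-$T_{b}$, and cross pairs, with cross distances decomposing through the cut vertex as $d(p,q)=d_{T_{a}}(p,u)+d_{T_{b}}(v,q)$) is precisely the standard bookkeeping the paper implicitly relies on. It is also consistent with the paper's Lemma \ref{lemma3}, where the identification is replaced by a cut edge and each of the $l_{1}l_{2}$ cross pairs picks up an extra unit of distance, which is a good sanity check on your decomposition.
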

The above lemma can be extended to three trees as follows.
\begin{lemma}\label{lemma2}
Let  $T_{a}$,$T_{b}$ and $T_{c}$ be three trees of orders   $n_{1}$,$n_{2}$ and $n_{3}$   respectively.  Let $u \in V(T_{a})$, $v \in V(T_{b})$,$w \in V(T_{c})$ and  $T_{a}.T_{b}.T_{c} (u,v,w)$ denotes the new tree obtained from $T_{a}$, $T_{b}$ and $T_{c}$ by identifying $u$, $v$ and $w$. Let the number of pendent vertices  in $T_{a}$, $T_{b}$ and $T_{c}$ be $l_{1}$,$l_{2}$ and $l_{3}$ respectively.\\
If $u$,$v$ and $w$ are nonpendent vertices, then 
\begin{equation}\label{eq:7}
TW(T_{a}.T_{b}.T_{c} (u,v,w))=TW(T_{a})+TW(T_{b})+TW(T_{c})+(l_{2}+l_{3})d^{+}(u)+ (l_{1}+l_{3})d^{+}(v) +(l_{1}+l_{2})d^{+}(w). 
\end{equation}
\end{lemma}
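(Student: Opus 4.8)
The plan is to reduce everything to Lemma~\ref{lemma1} by identifying the three trees in two stages. First I would form the intermediate tree $T_{ab}=T_{a}.T_{b}(u,v)$ and let $x$ denote its merged vertex (the common image of $u$ and $v$). Since $u$ and $v$ are nonpendent, $x$ has degree at least $4$ in $T_{ab}$, so it remains nonpendent, and the pendent-vertex set of $T_{ab}$ is exactly $P(T_{a})\cup P(T_{b})$; in particular $T_{ab}$ has $l_{1}+l_{2}$ pendent vertices. Lemma~\ref{lemma1} then yields
\[
TW(T_{ab})=TW(T_{a})+TW(T_{b})+l_{2}\,d^{+}(u)+l_{1}\,d^{+}(v).
\]

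Next I would observe that $T_{a}.T_{b}.T_{c}(u,v,w)$ is the \emph{same} tree as $T_{ab}.T_{c}(x,w)$: merging $u,v,w$ simultaneously at one point is the same as first merging $u,v$ and then merging the result with $w$, so the ``dot'' operation is associative in this sense. Because $x$ and $w$ are both nonpendent, I may apply Lemma~\ref{lemma1} a second time,
\[
TW(T_{ab}.T_{c}(x,w))=TW(T_{ab})+TW(T_{c})+l_{3}\,d^{+}_{T_{ab}}(x)+(l_{1}+l_{2})\,d^{+}(w),
\]
where $d^{+}_{T_{ab}}(x)$ is the sum of distances from $x$ to the pendent vertices of $T_{ab}$. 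The key bookkeeping step is that distances inside $T_{a}$ and inside $T_{b}$ are unchanged by the identification, so $d^{+}_{T_{ab}}(x)=d^{+}(u)+d^{+}(v)$, the contribution of the pendents of $T_{a}$ plus that of the pendents of $T_{b}$. Substituting the two displayed expressions into the last line and collecting the coefficients of $d^{+}(u)$, $d^{+}(v)$ and $d^{+}(w)$ reproduces exactly \eqref{eq:7}.

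The only points requiring care --- the main, if modest, obstacle --- are the two structural facts invoked above: that the merged vertex stays nonpendent at each stage, so that Lemma~\ref{lemma1} is genuinely applicable, and that the pendent-vertex set together with all intra-tree distances of $T_{ab}$ are inherited unchanged from $T_{a}$ and $T_{b}$, which is precisely what lets $d^{+}_{T_{ab}}(x)$ split as $d^{+}(u)+d^{+}(v)$. Once these are stated, the remaining algebra is routine.

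As an independent check I would verify \eqref{eq:7} directly from \eqref{eq:3} by partitioning the pendent pairs of $T_{a}.T_{b}.T_{c}(u,v,w)$ into those lying within a single $T_{i}$ (contributing $TW(T_{a})+TW(T_{b})+TW(T_{c})$) and the cross pairs. A cross pair with one endpoint $p\in P(T_{a})$ and the other $q\in P(T_{b})$ satisfies $d(p,q)=d(p,u)+d(v,q)$, since its unique path passes through the common vertex; summing over the $l_{1}l_{2}$ such pairs gives $l_{2}\,d^{+}(u)+l_{1}\,d^{+}(v)$, and likewise for the $(a,c)$ and $(b,c)$ cross terms. Adding the three cross contributions produces the coefficients $(l_{2}+l_{3}),(l_{1}+l_{3}),(l_{1}+l_{2})$, confirming the formula.
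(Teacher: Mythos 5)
Your proof is correct; note that the paper states Lemma~\ref{lemma2} without any proof at all, presenting it merely as an ``extension'' of Lemma~\ref{lemma1}, and your two-stage reduction --- forming $T_{ab}=T_{a}.T_{b}(u,v)$ first, checking that the merged vertex stays nonpendent so Lemma~\ref{lemma1} applies again, and using the splitting $d^{+}_{T_{ab}}(x)=d^{+}(u)+d^{+}(v)$ --- is exactly the natural way to carry that extension out. Your independent verification by partitioning pendent pairs into intra-tree and cross pairs is also sound, and in fact gives a self-contained proof of both lemmas simultaneously.
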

\begin{lemma}\label{lemma3}
Let $T$ be a tree  composed of two disjoint trees $T_{a}$ and $T_{b}$ of orders   $n_{1}$ and $n_{2}$   respectively.  Let $u \in V(T_{a})$, $v \in V(T_{b})$ and  $uv$ be   a cut-edge in $T$. Let the number of leaves in $T_{a}$ and $T_{b}$ be $l_{1}$ and $l_{2}$ respectively.
In $T$, for a vertex $x\in V(T)$ let $d^{+}(x)$ denote the  sum of  the distances of the form $d(x,l)$ where $l$ is a leaf in $T$.
Then
\begin{equation}\label{eq:3}
TW(T)=TW(T_{a})+TW(T_{b})+l_{2}d^{+}(u)+ l_{1}d^{+}(v)+l_{1}l_{2}. \\
\end{equation}
\end{lemma}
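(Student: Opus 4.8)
The plan is to evaluate $TW(T)$ directly from its definition as a sum over pairs of pendent vertices, organising that sum by the location of the two leaves. Throughout I read $d^{+}(u)$ as the sum of the distances from $u$ to the pendent vertices lying in $T_{a}$ and $d^{+}(v)$ as the corresponding sum for $T_{b}$; because $uv$ is a cut-edge, each such distance is realised by a path staying inside $T_{a}$ (resp. $T_{b}$), so it is the same whether measured in the branch or in the whole tree $T$. As in Lemma~\ref{lemma1} I assume $u$ and $v$ are nonpendent, so that adjoining the edge $uv$ neither creates nor destroys any leaf and the pendent set of $T$ is exactly the disjoint union $P(T_{a})\cup P(T_{b})$, of size $l_{1}+l_{2}$.

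First I would set up the partition. Write $L_{a}=P(T_{a})$ and $L_{b}=P(T_{b})$; every unordered pair of leaves of $T$ is of exactly one of three types: both in $L_{a}$, both in $L_{b}$, or one in each. For a pair $p,q\in L_{a}$ the unique $p$--$q$ path never uses the cut-edge $uv$, hence stays within $T_{a}$ and gives $d_{T}(p,q)=d_{T_{a}}(p,q)$; summing over all such pairs recovers $TW(T_{a})$, and the symmetric argument for $L_{b}$ recovers $TW(T_{b})$.

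The substantive contribution is the mixed type. For $p\in L_{a}$ and $q\in L_{b}$ the path from $p$ to $q$ must cross the bridge $uv$ exactly once, so $d_{T}(p,q)=d_{T_{a}}(p,u)+1+d_{T_{b}}(v,q)$. Summing over the $l_{1}l_{2}$ mixed pairs separates into three pieces: $\sum_{p\in L_{a}}\sum_{q\in L_{b}} d_{T_{a}}(p,u)=l_{2}\,d^{+}(u)$, the constant term $\sum_{p\in L_{a}}\sum_{q\in L_{b}}1=l_{1}l_{2}$, and $\sum_{p\in L_{a}}\sum_{q\in L_{b}} d_{T_{b}}(v,q)=l_{1}\,d^{+}(v)$. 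Adding the three types then yields exactly the claimed identity.

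A shorter route I would also record is to reduce to Lemma~\ref{lemma1}. Let $T^{\ast}=T_{a}.T_{b}(u,v)$ be the tree obtained by \emph{identifying} $u$ and $v$ rather than joining them by an edge; Lemma~\ref{lemma1} gives $TW(T^{\ast})=TW(T_{a})+TW(T_{b})+l_{2}\,d^{+}(u)+l_{1}\,d^{+}(v)$. Passing from $T^{\ast}$ to $T$ leaves every same-side leaf distance unchanged and increases each of the $l_{1}l_{2}$ mixed leaf distances by exactly $1$ (the inserted bridge), so $TW(T)=TW(T^{\ast})+l_{1}l_{2}$, which is the assertion. In either route the only real obstacle is the leaf bookkeeping: one must verify that $u$ and $v$ are nonpendent, since if either were a leaf of its branch the counts $l_{1},l_{2}$ and hence the term $l_{1}l_{2}$ would need correcting.
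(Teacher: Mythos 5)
The paper states Lemma \ref{lemma3} without proof (as it does Lemmas \ref{lemma1} and \ref{lemma2}), so there is no in-paper argument to compare yours against; your proof is correct and supplies the missing one. Your main route --- partitioning leaf pairs into the two within-branch types and the $l_{1}l_{2}$ mixed pairs, and using that a mixed pair satisfies $d_{T}(p,q)=d_{T_{a}}(p,u)+1+d_{T_{b}}(v,q)$ because the $p$--$q$ path crosses the bridge exactly once --- is the natural direct computation, and your alternative reduction to Lemma \ref{lemma1} (identify $u$ and $v$, then note that re-expanding the identified vertex into the edge $uv$ increases each of the $l_{1}l_{2}$ mixed leaf distances by exactly $1$) has the added value of showing how the lemma sits relative to the paper's own toolkit. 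Two of your side remarks are worth keeping because they repair the statement as printed. First, the lemma's definition of $d^{+}(x)$ as a sum over \emph{all} leaves of $T$ cannot be the intended reading: with it the formula already fails for two copies of $P_{3}$ joined at their centres, where the right-hand side gives $32$ while $TW(T)=16$; the branch-restricted reading you adopt ($d^{+}(u)$ summing only over leaves in $T_{a}$, and $d^{+}(v)$ only over leaves in $T_{b}$) is the one the paper itself uses when it applies the lemma in Theorem \ref{thm5}, where $d^{+}(u)+d^{+}(v)=n-2$ for the two path fragments. Second, the hypothesis that $u$ and $v$ are nonpendent in their branches, stated in Lemma \ref{lemma1} but silently dropped here, is genuinely needed: without it $P(T)$ is not the disjoint union $P(T_{a})\cup P(T_{b})$ and the counts $l_{1},l_{2}$ (hence the term $l_{1}l_{2}$) would be wrong, so you are right to flag it explicitly.
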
 
Equiseparability w.r.t the Wiener index  and equiseparability  w.r.t the terminal Wiener index are entirely different. So, it is useful to find  some  rules for  constructing equiseparable trees w.r.t the terminal Wiener index only. The following theorem can be  used to construct equiseparable  trees with respect to terminal Wiener index.
\begin{thm}\label{thm1}(\cite{xia})
Let $T$, $X$ and $Y$ be arbitary trees, each with more than two vertices. Let  $T_{1}$ be obtained from  $T$ by identifying the vertices  $u$ and $s$ and by identifying the vertices $v$ and $r$. Let  $T_{2}$ be obtained from  $T$ by identifying the vertices  $u$ and $r$, and by identifying the vertices $v$ and $s$. If $p_{x}-p_{s}=p_{y}-p_{r}$, then $T_{1}$ and $T_{2}$ are equiseparable  w.r.t terminal  Wiener index. $p_{x}$ and $p_{y}$ denote the number of pendent vertices of  fragments $X$ and $Y$, respectively. $p_{s}=1$ if $s$ is a pendent vertex of $X$;otherwise it is equal to 0. $p_{r}$ is defined similar to $p_{s}$.  
\end{thm}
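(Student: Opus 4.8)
The plan is to exhibit an explicit edge-labelling realising the equiseparability and then invoke the pendent-path-counting identity $TW(T)=\sum_{e}p_u(e)\,p_v(e)$, the terminal analogue of \eqref{eq:2} obtained by counting, for each edge $e$, the pairs of pendent vertices that $e$ separates. Since $T_1$ and $T_2$ share the same three building blocks $T$, $X$, $Y$ and differ only in which of $X,Y$ is attached at $u$ and which at $v$, the natural candidate labelling pairs each edge of $T_1$ with the combinatorially identical edge of $T_2$: an edge of the $T$-part with the same edge of the $T$-part, an edge of $X$ with the same edge of $X$, and likewise for $Y$. I would first record that identifying $s$ with $u$ (resp.\ $r$ with $v$) raises the degree of the shared vertex to at least two, so $u$ and $v$ are non-pendent in both $T_1$ and $T_2$; consequently the leaf set of either composite tree is the disjoint union of the $c$ surviving leaves of $T$ (those different from $u,v$), the $a:=p_x-p_s$ surviving leaves of $X$, and the $b:=p_y-p_r$ surviving leaves of $Y$, giving both trees the same pendent count $k=a+b+c$.

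Next I would verify the labelling edge-class by edge-class. For an edge $e$ interior to $X$, deleting $e$ isolates a subtree of $X$ not containing the gluing vertex $s$; its pendent vertices are leaves of $X$ unaffected by the gluing, so this side carries a number $\lambda_e$ of pendent vertices depending only on $X$ and $e$, not on whether $X$ is attached at $u$ or at $v$. Hence the unordered partition $\{\lambda_e,\,k-\lambda_e\}$ is identical in $T_1$ and $T_2$, and the same holds by symmetry for edges interior to $Y$. For an edge $e$ interior to $T$ whose removal leaves $u$ and $v$ on the same side, both $X$ and $Y$ sit entirely on that side in either construction, so once more the partition depends only on $T$ and $e$ and matches.

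The one class where the hypothesis does real work is the set of edges of $T$ that separate $u$ from $v$. For such an edge $e$, write $\alpha_e,\beta_e$ for the numbers of surviving $T$-leaves on the $u$-side and the $v$-side respectively. In $T_1$ the $u$-side gains the $a$ leaves of $X$ and the $v$-side the $b$ leaves of $Y$, yielding the partition $\{\alpha_e+a,\ \beta_e+b\}$; in $T_2$ the roles are swapped, yielding $\{\alpha_e+b,\ \beta_e+a\}$. These two unordered pairs coincide precisely when $a=b$, that is when $p_x-p_s=p_y-p_r$, which is exactly the stated hypothesis; under it the smaller entries agree and, the totals both being $k$, the labelling preserves $p_u(e)$ on this class as well. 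I expect this case to be the crux, since it is the only place where swapping $X$ and $Y$ could alter a pendent partition, and the whole theorem reduces to the observation that the condition $p_x-p_s=p_y-p_r$ neutralises exactly that change. Having matched $p_u(e)$ across every edge, $T_1$ and $T_2$ are equiseparable with respect to the terminal Wiener index, and the product identity then gives $TW(T_1)=TW(T_2)$.
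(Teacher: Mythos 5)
Your proof is correct, but it takes a genuinely different route from the paper, which in fact states Theorem \ref{thm1} as imported from \cite{xia} and offers no proof of it; the only technique the paper applies to this configuration is the algebraic one in the proof of Theorem \ref{thm2}: expand $TW(T_{1})$ and $TW(T_{2})$ via the vertex-identification formula of Lemma \ref{lemma2}, with correction terms $-p_{s}d^{+}(s)-p_{r}d^{+}(r)$ for the case that $s$ or $r$ is pendent, cancel the common terms, and reduce the desired equality to $d^{+}(u)(p_{x}-p_{s})+d^{+}(v)(p_{y}-p_{r})=d^{+}(u)(p_{y}-p_{r})+d^{+}(v)(p_{x}-p_{s})$, which holds when $p_{x}-p_{s}=p_{y}-p_{r}$ (or $d^{+}(u)=d^{+}(v)$). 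That computation establishes only equality of the terminal Wiener indices, whereas the theorem literally asserts equiseparability in the sense of the paper's definition, i.e.\ an edge labelling matching all quantities $p_{u}(e)$. Your bijective argument --- identity pairing on the three blocks, the cut identity $TW=\sum_{e}p_{u}(e)p_{v}(e)$, and the observation that only the edges of $T$ separating $u$ from $v$ feel the swap of $X$ and $Y$, where the partitions $\{\alpha_{e}+a,\beta_{e}+b\}$ and $\{\alpha_{e}+b,\beta_{e}+a\}$ agree once $a=b$ --- proves this stronger, edge-level statement, so it is actually better aligned with the stated conclusion than the in-paper machinery; what the algebraic route buys in exchange is the generalization of Theorem \ref{thm2}, where the alternative condition $d^{+}(u)=d^{+}(v)$ drops out of the same computation with no extra work, something your per-edge matching does not give (when $a\neq b$ the separating-edge partitions genuinely differ, and only the weighted sums compensate). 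One small slip: your claim that the unordered pairs coincide ``precisely when $a=b$'' is not exact, since they also coincide whenever $\alpha_{e}=\beta_{e}$; but you only use the sufficiency direction, so nothing breaks.
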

\begin{figure}[H]
\begin{center}
\begin{pspicture}[showgrid=false](0,-3)(10,4)


\psdots[linewidth=8pt,dotsize=3pt 0](1,4)
\psdots[linewidth=8pt,dotsize=3pt 0](3,4)
\psdots[linewidth=8pt,dotsize=3pt 0](6,2)
\psdots[linewidth=8pt,dotsize=3pt 0](9,2)
\psdots[linewidth=8pt,dotsize=3pt 0](1,0)
\psdots[linewidth=8pt,dotsize=3pt 0](3,0)
\psdots[linewidth=8pt,dotsize=3pt 0](7,0)
\psdots[linewidth=8pt,dotsize=4pt 0](9,0)

\fontsize{8}{10}\rput[bl](2,3){T}
\fontsize{8}{10}\rput[bl](5.8,2.5){X}
\fontsize{8}{10}\rput[bl](8.8,2.5){Y}
\fontsize{8}{10}\rput[bl](2,-2.5){$T_{1}$}
\fontsize{8}{10}\rput[bl](8,-2.5){$T_{2}$}
\fontsize{8}{10}\rput[bl](2,-1){T}
\fontsize{8}{10}\rput[bl](8,-1){T}
\fontsize{8}{10}\rput[bl](.85,.5){X}
\fontsize{8}{10}\rput[bl](2.85,.5){Y}
\fontsize{8}{10}\rput[bl](6.85,.5){Y}
\fontsize{8}{10}\rput[bl](8.85,.5){X}

\psline (0,4)(4,4)
\psline (0,4)(2,2)
\psline (2,2)(4,4)
\psline (6,2)(5.5,3)
\psline (5.5,3)(6.5,3)
\psline (6,2)(6.5,3)
\psline (9,2)(8.5,3)
\psline (8.5,3)(9.5,3)
\psline (9,2)(9.5,3)

\psline (0,0)(4,0)
\psline (0,0)(2,-2)
\psline (2,-2)(4,0)

\psline (6,0)(10,0)
\psline (6,0)(8,-2)
\psline (8,-2)(10,0)

\psline (1,0)(.5,1)
\psline (0.5,1)(1.5,1)
\psline (1.5,1)(1,0)

\psline (3,0)(2.5,1)
\psline (2.5,1)(3.5,1)
\psline (3.5,1)(3,0)

\psline (7,0)(6.5,1)
\psline (6.5,1)(7.5,1)
\psline (7.5,1)(7,0)

\psline (9,0)(8.5,1)
\psline (8.5,1)(9.5,1)
\psline (9.5,1)(9,0)

\rput[bl](1,4.1){$u$}
\rput[bl](3,4.1){$v$}
\rput[bl](6,1.8){$s$}
\rput[bl](9,1.8){$r$}
\rput[bl](1,-0.2){$u$}
\rput[bl](3,-0.2){$v$}
\rput[bl](7,-0.2){$u$}
\rput[bl](9,-0.2){$v$}
\end{pspicture}
\caption{Two  trees $T_{1}$ and $T_{2}$ constructed by method in theorem \ref{thm1} }\label{Fig.1.}
\end{center}
\end{figure}
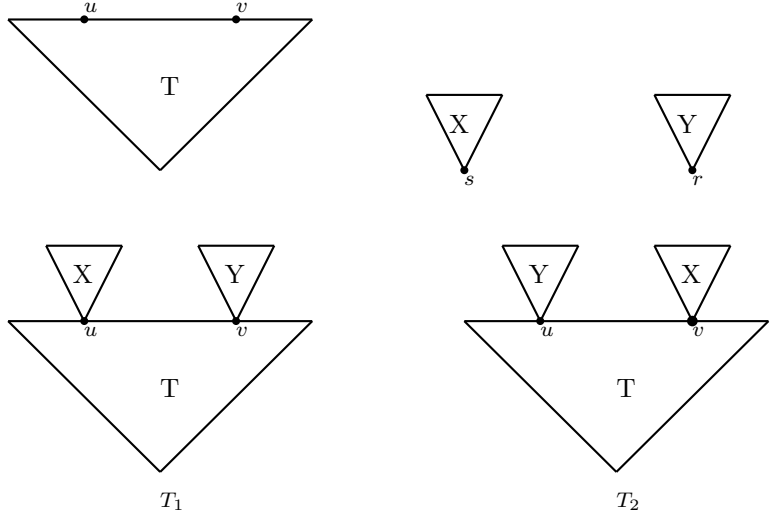
Fig.\ref{Fig.1.} shows the construction of equiseparable trees w.r.t terminal Wiener idex.
We will prove that the condition in theorem \ref{thm1} is sufficient but not necessary  to generate  equiseparable trees w.r.t terminal  Wiener index.
 We extend  theorem \ref{thm1} as
\begin{thm}\label{thm2}
Let $T$, $X$ and $Y$ be arbitary trees, each with at least three vertices. Let  $T_{3}$ be obtained from  $T$ by identifying the vertices  $u$ and $s$ and by identifying the vertices $v$ and $r$. Let  $T_{4}$ be obtained from  $T$ by identifying the vertices  $u$ and $r$, and by identifying the vertices $v$ and $s$. If \textbf{either $p_{x}-p_{s}=p_{y}-p_{r}$ or $d^{+}(u)=d^{+}(v)$}, then $T_{3}$ and $T_{4}$ are equiseparable  w.r.t terminal  Wiener index. $p_{x}$,$p_{y}$,$p_{t}$,$p_{s}$ and $p_{r}$are same as in theorem \ref{thm1}.
\end{thm}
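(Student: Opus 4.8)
The plan is to establish the arithmetic identity $TW(T_3)=TW(T_4)$, i.e.\ that $T_3$ and $T_4$ share the same terminal Wiener index (the conclusion asserted by the theorem), by deriving one closed-form expression for the difference $TW(T_3)-TW(T_4)$ and checking that each of the two hypotheses forces it to vanish. The guiding observation is that $T_3$ and $T_4$ are assembled from the very same three pieces $T$, $X$, $Y$ and differ only in that the fragments $X$ and $Y$ are interchanged between the attachment vertices $u$ and $v$. I therefore expect most of the terminal Wiener index to be common to both trees, with the discrepancy concentrated in a few ``cross'' contributions.

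First I would fix the leaf sets. Since $X$ and $Y$ each have at least three vertices, gluing either of them at $u$ (resp.\ $v$) raises that vertex's degree above $1$, so $u$ and $v$ are non-pendent in both $T_3$ and $T_4$. Hence in each tree the set of pendent vertices is the disjoint union of $A$, the leaves of $T$ other than $u,v$; of $\mathcal{X}$, the surviving leaves of $X$ (all leaves of $X$ except $s$ when $s$ is itself a leaf, so $|\mathcal{X}|=p_x-p_s$); and of $\mathcal{Y}$ (with $|\mathcal{Y}|=p_y-p_r$). I then split $TW$ as a sum over leaf pairs into the three intra-block sums (within $A$, within $\mathcal{X}$, within $\mathcal{Y}$) and the three cross sums ($A$--$\mathcal{X}$, $A$--$\mathcal{Y}$, $\mathcal{X}$--$\mathcal{Y}$). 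Because distances inside each individual fragment are unaffected by where the other fragment is attached, the three intra-block sums are identical in $T_3$ and $T_4$ and cancel in the difference.

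Next I would evaluate the cross sums using additivity of distance along tree paths, writing each leaf-to-leaf distance as (distance to the near attachment vertex) $+\,d_T(u,v)+$ (distance to the far attachment vertex). Setting $S_X=\sum_{\xi\in\mathcal{X}} d_X(s,\xi)$ and $S_Y=\sum_{\eta\in\mathcal{Y}} d_Y(r,\eta)$, a direct count shows that the $\mathcal{X}$--$\mathcal{Y}$ cross sum equals $(p_y-p_r)S_X+(p_x-p_s)S_Y+(p_x-p_s)(p_y-p_r)\,d_T(u,v)$ in both trees: it is symmetric under the swap and so also cancels. The only asymmetry comes from the $A$--$\mathcal{X}$ and $A$--$\mathcal{Y}$ sums: in $T_3$ their relevant part is $(p_x-p_s)\,d^{+}(u)+(p_y-p_r)\,d^{+}(v)$, whereas in $T_4$ the roles of $u$ and $v$ are exchanged, giving $(p_x-p_s)\,d^{+}(v)+(p_y-p_r)\,d^{+}(u)$, where $d^{+}(u),d^{+}(v)$ are the sums of distances from $u,v$ to the leaves of $T$. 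Subtracting yields the master identity
\begin{equation*}
TW(T_3)-TW(T_4)=\big((p_x-p_s)-(p_y-p_r)\big)\big(d^{+}(u)-d^{+}(v)\big),
\end{equation*}
and the theorem follows at once: the first factor vanishes when $p_x-p_s=p_y-p_r$, while the second vanishes when $d^{+}(u)=d^{+}(v)$.

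The step I expect to need the most care is the bookkeeping that identifies the $A$--$\mathcal{X}$/$A$--$\mathcal{Y}$ contributions with exactly $d^{+}(u)$ and $d^{+}(v)$: these $d^{+}$-values are defined relative to the leaves of $T$, so I must verify that the pairs being summed run over precisely the leaves of $T$ and that $u,v$ contribute the same way on both sides. This is clean precisely because $u,v$ are non-pendent in $T$; if one of them were pendent, a correction term $\pm d_T(u,v)$ would surface and the identity would need adjustment, so I would state the non-pendency of $u,v$ in $T$ explicitly (consistent with the hypotheses of Lemmas~\ref{lemma1}--\ref{lemma2}) before invoking it. A sanity check on a concrete asymmetric $T$ with $p_x-p_s\neq p_y-p_r$ and $d^{+}(u)\neq d^{+}(v)$ confirms both the sign and the magnitude predicted by the identity.
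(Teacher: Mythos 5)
Your proposal is correct and follows essentially the same route as the paper: the paper likewise expands $TW(T_3)$ and $TW(T_4)$ (via Lemma~\ref{lemma2}, producing the same cross term $(p_{x}-p_{s})(p_{y}-p_{r})d(u,v)$ and the same pendency corrections for $s$ and $r$ that your $\mathcal{X},\mathcal{Y}$ bookkeeping yields), cancels the common terms, and reduces the claim to $d^{+}(u)(p_{x}-p_{s})+d^{+}(v)(p_{y}-p_{r})=d^{+}(u)(p_{y}-p_{r})+d^{+}(v)(p_{x}-p_{s})$, which is exactly your factored identity $\bigl((p_{x}-p_{s})-(p_{y}-p_{r})\bigr)\bigl(d^{+}(u)-d^{+}(v)\bigr)=0$. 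Your only departures --- deriving the expansion from first principles rather than citing the lemma, and explicitly flagging that $u$ and $v$ must be nonpendent in $T$ (which the paper leaves implicit in the lemma's hypotheses) --- are refinements of, not alternatives to, the paper's argument.
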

\begin{proof}
By lemma \ref{lemma2}, we get
\begin{equation}\label{eq:7}
\begin{split}
TW(T_{1})=TW(T)+TW(X)+TW(Y)+d^{+}(s)(p_{t}+p_{y}-p_{r})+ d^{+}(r)(p_{t}+p_{x}-p_{s})\\+(p_{x}-p_{s})(p_{y}-p_{r})d(u,v)+d^{+}(u)(p_{x}-p_{s})+d^{+}(v)(p_{y}-p_{r})-p_{s}d^{+}(s)-p_{r}d^{+}(r)
\end{split}
\end{equation}
 and 
\begin{equation}\label{eq:8}
\begin{split}
TW(T_{2})=TW(T)+TW(X)+TW(Y)+d^{+}(s)(p_{t}+p_{y}-p_{r})+ d^{+}(r)(p_{t}+p_{x}-p_{s})+\\(p_{x}-p_{s})(p_{y}-p_{r})d(u,v)+d^{+}(u)(p_{y}-p_{r})+d^{+}(v)(p_{x}-p_{s})-p_{s}d^{+}(s)-p_{r}d^{+}(r).
\end{split} 
\end{equation}
where the last two terms in (\ref{eq:7}) and (\ref{eq:8}) respectively are correction factors for $TW(X)$ and $TW(Y)$ in case $s$ and $r$ are pendent. From (\ref{eq:7}) and (\ref{eq:8}), for $TW(T_{1})=TW(T_{2})$\\
$d^{+}(u)(p_{x}-p_{s})+d^{+}(v)(p_{y}-p_{r})=d^{+}(u)(p_{y}-p_{r})+d^{+}(v)(p_{x}-p_{s}).$
This will be true if either (a) $p_{x}-p_{s}=p_{y}-p_{r}$ or (b) $d^{+}(u)=d^{+}(v)$ holds.  
\end{proof}
The above theorem  shows that  it is possible to   construct  equiseparable trees  w.r.t  terminal Wiener index based on distance parameters also.
Consider the following trees $T_{3}$ and $T_{4}$.\\
\begin{figure}[H]
\begin{center}
\begin{pspicture}[showgrid=false](0,0)(11,5)
\cnode*(0,5){0.10}{A}
\psline(0.1,5)(.9,5)
\cnode*(1,5){.1}{A}
\psline(1.1,5)(1.9,5)
\cnode*(2,5){.1}{A}
\psline(2.1,5)(2.9,5)
\cnode*(3,5){.1}{A}
\psline(3.1,5)(3.9,5)
\cnode*(4,5){.1}{A}

\rput[bl](2,5.2){u}
\rput[bl](3,5.2){$v$}
\rput[bl](6,3.7){$s$}
\rput[bl](9,3.65){$r$}
\rput[bl](1.9,3.3){$T$}
\rput[bl](5.9,3.3){$X$}
\rput[bl](8.9,3.3){$Y$}

\cnode*(2,4){.1}{A}
\cnode*(3,4){.1}{A}
\psline(2,4)(2,5)
\psline(3,4)(3,5)
\cnode*(6,4){.1}{A}
\cnode*(5,4){.1}{A}
\cnode*(7,4){.1}{A}
\cnode*(5.5,4.75){.1}{A}
\cnode*(6.5,4.75){.1}{A}
\psline(5,4)(6,4)
\psline(6,4)(7,4)
\psline(6,4)(5.5,4.75)
\psline(6,4)(6.5,4.75)
\cnode*(9,4){.1}{A}
\cnode*(9.75,4.75){.1}{A}
\psline(9,4)(9.75,4.75)
\cnode*(9,5){.1}{A}
\psline(9,4)(9,5)

\cnode*(8.25,5.75){.1}{A}
\cnode*(9.75,5.75){.1}{A}
\psline(8.25,5.75)(9,5)
\psline(9.75,5.75)(9,5)

\cnode*(0,1){0.10}{A}
\psline(0.1,1)(.9,1)
\cnode*(1,1){.1}{A}
\psline(1.1,1)(1.9,1)
\cnode*(2,1){.1}{A}
\psline(2.1,1)(2.9,1)
\cnode*(3,1){.1}{A}
\psline(3.1,1)(3.9,1)
\cnode*(4,1){.1}{A}
\cnode*(2,0){.1}{A}
\cnode*(3,0){.1}{A}
\psline(2,0)(2,1)
\psline(3,0)(3,1)

\cnode*(6,1){0.10}{A}
\psline(6.1,1)(6.9,1)
\cnode*(7,1){.1}{A}
\psline(7.1,1)(7.9,1)
\cnode*(8,1){.1}{A}
\psline(8.1,1)(8.9,1)
\cnode*(9,1){.1}{A}
\psline(9.1,1)(9.9,1)
\cnode*(10,1){.1}{A}
\cnode*(8,0){.1}{A}
\cnode*(9,0){.1}{A}
\psline(8,0)(8,1)
\psline(9,0)(9,1)

\cnode*(2,1){.1}{A}
\cnode*(1.25,1.25){.1}{A}
\cnode*(1.5,1.75){.1}{A}
\cnode*(2.5,1.75){.1}{A}
\cnode*(2.75,1.25){.1}{A}
\psline(2,1)(1.25,1.25)
\psline(2,1)(1.5,1.75)
\psline(2,1)(2.5,1.75)
\psline(2,1)(2.75,1.25)

\cnode*(3,1){.1}{A}
\cnode*(3.75,1.75){.1}{A}
\psline(3,1)(3.75,1.75)
\cnode*(3,2){.1}{A}
\psline(3,1)(3,2)
\cnode*(3.75,2.75){.1}{A}
\psline(3.75,2.75)(3,2)
\cnode*(2.25,2.75){.1}{A}
\psline(2.25,2.75)(3,2)

\cnode*(8,2){.1}{A}
\cnode*(7.25,2.75){.1}{A}
\psline(8,1)(8,2)
\cnode*(8.75,2.75){.1}{A}
\psline(7.25,2.75)(8,2)
\psline(8.75,2.75)(8,2)
\cnode*(7.25,1.75){.1}{A}
\psline(8,1)(7.25,1.75)

\cnode*(9,1){.1}{A}
\cnode*(8.25,1.25){.1}{A}
\cnode*(8.5,1.75){.1}{A}
\cnode*(9.5,1.75){.1}{A}
\cnode*(9.75,1.25){.1}{A}
\psline(9,1)(8.25,1.25)
\psline(9,1)(8.5,1.75)
\psline(9,1)(9.5,1.75)
\psline(9,1)(9.75,1.25)
\fontsize{8}{10}\rput[bl](2.5,-.5){$T_{3}$}
\fontsize{8}{10}\rput[bl](8.5,-.5){$T_{4}$}
\end{pspicture}
\end{center}
\caption{Equiseparable trees w.r.t terminal    Wiener index constructed by method in theorem \ref{thm2}. }\label{fig.2.}
\end{figure}
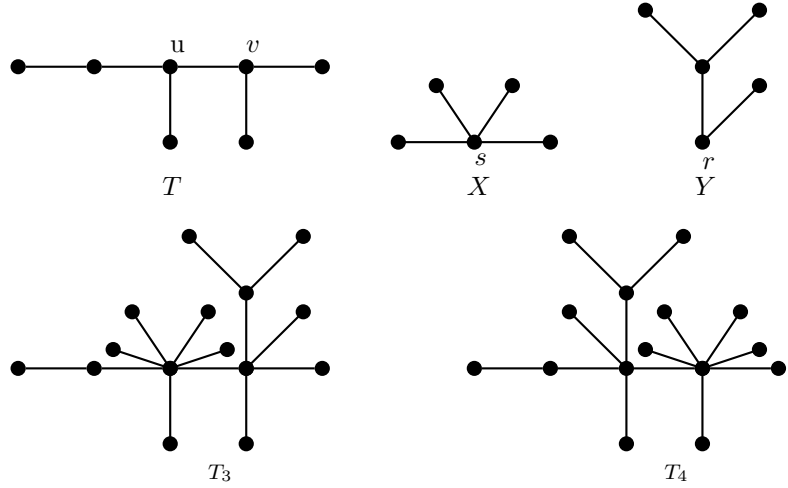
In Fig.\ref{fig.2.} we can see that $p_{x}-p_{s} \neq p_{y}-p_{r}$ but  $TW(T_{3})=TW(T_{4})$. For vertices $u$ and $v$, $d^{+}(u)=d^{+}(v)$.
The following  theorem is based on partitioning of integers.
\begin{thm}\label{thm5}
Let $T$ be a tree  of order $n$. Let $n_{1},n_{2}$ and $n_{3},n_{4}$ be  two different partitions of $n$. Let $T'$ be tree obtained by connecting some nonpendent vertex of $P_{n_{1}}$ and $P_{n_{2}}$ by an edge. Let $T''$ be tree obtained by connecting some nonpendent vertex of $P_{n_{3}}$ and $P_{n_{4}}$ by an edge. Then $T'$ and $T''$ are equiseparable w.r.t terminal Wiener index.
\end{thm}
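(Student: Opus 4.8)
The plan is to prove something stronger than mere equality: I will show that \emph{every} tree built by this recipe has terminal Wiener index exactly $3n-2$, a value that depends only on the order $n$ and is independent both of the chosen two-part partition and of which nonpendent vertices are joined. Equality $TW(T')=TW(T'')$ is then immediate. I first record the implicit hypothesis $n_1,n_2,n_3,n_4\ge 3$, since a path must have at least three vertices to possess a nonpendent vertex. Next I fix the leaf structure: the joined vertices have degree $2$ inside their paths and degree $3$ after the join, so they remain nonpendent, and hence the pendent vertices of $T'$ are precisely the four endpoints of $P_{n_1}$ and $P_{n_2}$. Thus $T'$ and $T''$ both have order $n$ and exactly four pendent vertices, which is the prerequisite for equiseparability in the sense used here.

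The core is a direct evaluation of the sum of distances over all pairs of the four pendent vertices of $T'$. Let $a$ and $b$ be the joined internal vertices of $P_{n_1}$ and $P_{n_2}$, and write the distances from $a$ to the two ends of $P_{n_1}$ as $\alpha,\beta$ and from $b$ to the two ends of $P_{n_2}$ as $\gamma,\delta$; since $a,b$ lie on their respective paths, $\alpha+\beta=n_1-1$ and $\gamma+\delta=n_2-1$ regardless of where $a,b$ sit. The two within-path leaf pairs contribute $(n_1-1)+(n_2-1)$. Each of the four cross leaf pairs joins an end of $P_{n_1}$ to an end of $P_{n_2}$ through the new edge $ab$, so these four distances are $\alpha+1+\gamma$, $\alpha+1+\delta$, $\beta+1+\gamma$, $\beta+1+\delta$, summing to $4+2(\alpha+\beta)+2(\gamma+\delta)=4+2(n_1-1)+2(n_2-1)$. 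Adding the two groups gives $TW(T')=3(n_1-1)+3(n_2-1)+4=3(n-2)+4=3n-2$, and the identical computation for $P_{n_3},P_{n_4}$ yields $TW(T'')=3n-2$.

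There is no genuine obstacle here; what the calculation really exhibits is a rigidity that deserves a sentence of emphasis. Because $\alpha+\beta$ and $\gamma+\delta$ are fixed by the path lengths alone, neither the partition nor the placement of the joined vertices affects the total, which is exactly why two structurally different trees share the same index. As an independent check I would re-derive the same number from Lemma \ref{lemma3} applied with $T_a=P_{n_1}$ and $T_b=P_{n_2}$: there $l_1=l_2=2$, $TW(P_{n_i})=n_i-1$, and the sum of distances from the internal join vertex to the two ends of its own path is $d^{+}=n_i-1$, so the formula returns $(n_1-1)+(n_2-1)+2(n_1-1)+2(n_2-1)+4=3n-2$, in agreement with the direct count.
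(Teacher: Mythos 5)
Your proposal is correct and is, at bottom, the paper's own argument: the paper proves the theorem in one line by applying Lemma~\ref{lemma3} with $l_{1}=l_{2}=2$, $TW(P_{n_{1}})+TW(P_{n_{2}})=n-2$ and $d^{+}(u)+d^{+}(v)=n-2$, and your direct six-pair distance count is exactly Lemma~\ref{lemma3} unfolded for the special case of two paths, while your closing cross-check reproduces the paper's application verbatim. Your write-up does add two worthwhile points the paper leaves implicit --- the explicit common value $TW=3n-2$ and the hypothesis $n_{i}\geq 3$ needed for the paths to have nonpendent vertices --- but the underlying route is the same.
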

\begin{proof}
Since $TW(P_{n_{1}})+TW(P_{n_{2}})=TW(P_{n_{3}})+TW(P_{n_{4}})=n-2$ and $l_{1}=l_{2}=2$ and $d^{+}(u)+d^{+}(v)=n-2$, by lemma \ref{lemma3} both $T'$ and $T''$ are equiseparable w.r.t terminal Wiener index.
\end{proof}
Consider two trees $T'$ and $T''$ constructed using theorem $\ref{thm5}$.
\begin{figure}[H]
\begin{center}
\begin{pspicture}[showgrid=false](-1,-1)(11,3)
\cnode*(0,2){0.10}{A}
\psline(0.1,2)(.9,2)
\cnode*(1,2){.1}{A}
\psline(1.1,2)(1.9,2)
\cnode*(2,2){.1}{A}
\psline(2.1,2)(2.9,2)
\cnode*(3,2){.1}{A}
\psline(3.1,2)(3.9,2)
\cnode*(4,2){.1}{A}
\psline(4.1,2)(4.9,2)
\cnode*(5,2){.1}{A}
\cnode*(2,1){.1}{A}
\cnode*(3,1){.1}{A}
\psline(2,1.9)(2,1.1)
\psline(3,1.9)(3,1.1)
\psset{linestyle=solid}
\cnode*(6,2){0.10}{A}
\psline(6.1,2)(6.9,2)
\cnode*(7,2){.1}{A}
\psline(7.1,2)(7.9,2)
\cnode*(8,2){.1}{A}
\psline(8.1,2)(8.9,2)
\cnode*(9,2){.1}{A}
\psline(9.1,2)(9.9,2)
\cnode*(7,0){.1}{A}
\cnode*(7,2){.1}{A}
\cnode*(9,0){.1}{A}
\cnode*(10,2){.1}{A}
\cnode*(8,1){.1}{A}
\psline(8,1.1)(8,1.9)
\psline(7.05,0)(7.95,0.9)
\rput[bl](2,-1){$T'$}
\psline(8.05,.95)(8.95,0.05)
\rput[bl](8,-1){$T''$}
\end{pspicture}
\end{center}
\caption{Two trees equiseparable w.r.t  terminal Wiener index. } \label{fig:3}
\end{figure}
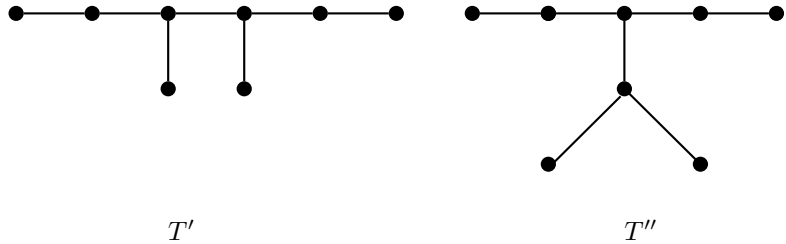
In fig.\ref{fig:3} $n_{1}=4$,$n_{2}=4$, $n_{3}=5$ and $n_{4}=3$. It is easy to see that $TW(T')=TW(T'')=22$.\\
In the rest of the paper, we use $p_{t}$ to denote the number of pendent vertices in a tree $T$.
\begin{thm}\label{thm3}
Let $Z$ be an  arbitary tree; $u\in V(Z)$, tree   $T'$ is  obtained from $T$ by identifying the vertices  $u$ and $i$ and  $T''$ be obtained from $T$ by identifying the vertices  $u$ and $j$. If $d^{+}(i)=d^{+}(j)$, then  $T'$ and $T''$ are equiseparable w.r.t terminal Wiener index. See Fig. \ref{fig.4.}
\end{thm}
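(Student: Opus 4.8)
The plan is to read off $TW(T')$ and $TW(T'')$ directly from Lemma \ref{lemma1}, since each of $T'$ and $T''$ is exactly the tree produced by identifying one vertex of $Z$ with one vertex of $T$. Write $l_1$ for the number of pendent vertices of $Z$ and $l_2$ for the number of pendent vertices of $T$, and assume (as Lemma \ref{lemma1} demands) that $u$ is a nonpendent vertex of $Z$ and that $i,j$ are nonpendent vertices of $T$. Applying Lemma \ref{lemma1} to the pair $(Z,T)$ with the identification $u=i$ gives
\begin{equation*}
TW(T')=TW(Z)+TW(T)+l_2\,d^{+}(u)+l_1\,d^{+}(i),
\end{equation*}
and applying it again with the identification $u=j$ gives
\begin{equation*}
TW(T'')=TW(Z)+TW(T)+l_2\,d^{+}(u)+l_1\,d^{+}(j),
\end{equation*}
where $d^{+}(u)$ is measured inside $Z$ and $d^{+}(i),d^{+}(j)$ inside $T$.

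Next I would subtract the two displays: every term cancels except the last, leaving $TW(T')-TW(T'')=l_1\bigl(d^{+}(i)-d^{+}(j)\bigr)$. The hypothesis $d^{+}(i)=d^{+}(j)$ then forces the right-hand side to vanish, so $TW(T')=TW(T'')$. I would also verify the two remaining items in the definition of equiseparability: both $T'$ and $T''$ have order $n_Z+n_T-1$, and because identifying two nonpendent vertices neither destroys nor creates a leaf, both have exactly $l_1+l_2$ pendent vertices. Hence $T'$ and $T''$ share order, leaf count, and terminal Wiener index, which is the sense of equiseparability used throughout the paper.

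The computation is routine once Lemma \ref{lemma1} is available; the only points needing care are confirming that the $d^{+}(i),d^{+}(j)$ of the hypothesis are precisely the sums of distances to the pendent vertices of $T$ taken inside $T$ (exactly the quantities the lemma feeds in), and the nonpendency of $u,i,j$, without which the clean additive formula fails. The genuinely interesting point, which I expect to be the main subtlety rather than a real obstacle, is that $T'$ and $T''$ are in general \emph{not} equiseparable in the stronger edge-partition sense: attaching $Z$ at $i$ versus at $j$ shifts the pendent vertices across the edges of $T$, so the multisets of values $p_u(e)$ differ, yet the distance-weighted total $TW=\sum_{e}p_u(e)p_v(e)$ is unchanged precisely because $d^{+}(i)=d^{+}(j)$. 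This is exactly the phenomenon the paper advertises, namely that distance parameters rather than leaf-partition data govern the equality of the terminal Wiener index.
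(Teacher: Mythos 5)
Your proof is correct and takes essentially the same route as the paper: both apply Lemma \ref{lemma1} once with the identification at $i$ and once at $j$, then subtract, so that $TW(T')-TW(T'')=p_z\left(d^{+}(i)-d^{+}(j)\right)$ vanishes by hypothesis (your $l_1,l_2$ are the paper's $p_z,p_t$). Your extra checks on order and pendent-vertex count, and your caveat that this establishes equality of the terminal Wiener index rather than the stronger edge-labelling condition in the paper's definition of equiseparability, go slightly beyond the paper's own proof, which stops at the $TW$ computation.
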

\begin{figure}[H]
\begin{center}
\begin{pspicture}[showgrid=false](0,0)(11,5)
\psline(2,4)(1,5)
\psline(2,4)(1,3)
\psline(1,3)(1,5)
\psline(2,4)(9,4)
\psline(10,5)(9,4)
\psline(10,3)(9,4)
\psline(10,3)(10,5)

\rput[bl](1.25,4){X}
\rput[bl](9.4,4){Y}
\rput[bl](1.25,1){X}
\rput[bl](9.4,1){Y}
\rput[bl](3.75,4.5){Z}
\rput[bl](6.75,1.5){Z}

\rput[bl](2,3.6){$1$}
\rput[bl](2,0.6){$1$}
\rput[bl](4,0.6){$i$}
\rput[bl](7,.6){$j$}
\rput[bl](4,3.6){$i$}
\rput[bl](7,3.6){$j$}
\rput[bl](3.9,4.2){$u$}
\rput[bl](6.9,1.2){$u$}
\rput[bl](2.75,3.7){$.$}
\rput[bl](3,3.7){$.$}
\rput[bl](3.25,3.7){$.$}

\rput[bl](5.25,3.7){$.$}
\rput[bl](5.5,3.7){$.$}
\rput[bl](5.75,3.7){$.$}

\rput[bl](7.75,3.7){$.$}
\rput[bl](8,3.7){$.$}
\rput[bl](8.25,3.7){$.$}

\rput[bl](2.75,.7){$.$}
\rput[bl](3,.7){$.$}
\rput[bl](3.25,.7){$.$}

\rput[bl](5.25,.7){$.$}
\rput[bl](5.5,.7){$.$}
\rput[bl](5.75,.7){$.$}

\rput[bl](7.75,.7){$.$}
\rput[bl](8,.7){$.$}
\rput[bl](8.25,.7){$.$}

\rput[bl](8.8,0.6){$m$}
\rput[bl](8.8,3.6){$m$}

\cnode*(2,4){.1}{A}
\cnode*(9,4){.1}{A}

\cnode*(2,1){.1}{A}
\cnode*(9,1){.1}{A}
\psline(2,1)(1,2)
\psline(2,1)(1,0)
\psline(1,0)(1,2)
\psline(2,1)(9,1)
\psline(10,0)(9,1)
\psline(10,2)(9,1)
\psline(10,0)(10,2)

\psline(4,4)(5,5)
\psline(4,4)(3,5)
\psline(3,5)(5,5)
\psline(7,1)(6,2)
\psline(7,1)(8,2)
\psline(6,2)(8,2)

\cnode*(4,1){.1}{A}
\cnode*(4,4){.1}{A}
\cnode*(7,1){.1}{A}
\cnode*(7,4){.1}{A}

\end{pspicture}
\end{center}
\caption{Two  trees  $T'$ and $T''$ equiseparable w.r.t  terminal   Wiener index. }\label{fig.4.}
\end{figure}
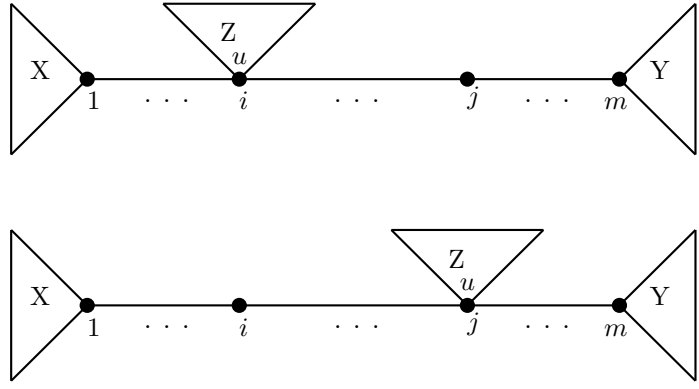
\begin{proof}
By lemma \ref{lemma1}, we get\\
\begin{equation}\label{eq:9}
TW(T')=TW(T)+TW(Z)+d^{+}(i)p_{z}+d^{+}(u)p_{t}
\end{equation}
 and 
\begin{equation}\label{eq:10}
TW(T'')=TW(T)+TW(Z)+d^{+}(j)p_{z}+d^{+}(u)p_{t}.
\end{equation} where $p_{t}$ and $p_{z}$ denote the number of pendent vertices in $T$ and $Z$ respectively.\\
From (\ref{eq:9}) and (\ref{eq:10}),
for $TW(T')=TW(T'')$  the equality
$d^{+}(i)=d^{+}(j)$ must be true.
\end{proof}
\begin{thm}\label{thm4}
Let $T$ be a tree with vertices $i_{1},i_{2},....i_{n}$. Let $Z_{a}$'s be arbitary trees for ${1 \leq a \leq  t}$. Let $T_{1}$ be the tree obtained by identifying $i_{a}$ and $u_{a}$ in $Z_{a}$. $T_{2}$ is obtained by identifying $i_{a}+j$ and $u_{a}$ in $Z_{a}$.  If $d^{+}(i_{a})=d^{+}(i_{a}+j)$ for every $1 \leq a \leq  t$, then the resulting two trees are equiseparable  w.r.t terminal  Wiener index. See Fig. \ref{fig.5.} 
\end{thm}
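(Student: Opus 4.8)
The plan is to generalize Theorem \ref{thm3} from a single attached tree $Z$ to a family of trees $Z_1,\dots,Z_t$ attached at the vertices $i_1,\dots,i_t$ of $T$, and to show that shifting every attachment point by the same amount $j$ (so $i_a$ becomes $i_a+j$) preserves the terminal Wiener index provided $d^{+}(i_a)=d^{+}(i_a+j)$ for each $a$. The natural approach is induction on $t$, the number of attached trees, using Lemma \ref{lemma1} as the single-gluing formula and building $T_1$ and $T_2$ one tree at a time.

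First I would set up the induction. For the base case $t=1$ the statement is exactly Theorem \ref{thm3}: attaching $Z_1$ at $i_1$ versus at $i_1+j$ yields equiseparable trees when $d^{+}(i_1)=d^{+}(i_1+j)$, which follows from equations (\ref{eq:9}) and (\ref{eq:10}). For the inductive step, suppose the result holds after attaching $Z_1,\dots,Z_{t-1}$, producing intermediate trees $T_1^{(t-1)}$ and $T_2^{(t-1)}$ with $TW(T_1^{(t-1)})=TW(T_2^{(t-1)})$. I would then attach $Z_t$ at $i_t$ in the first tree and at $i_t+j$ in the second, and apply Lemma \ref{lemma1} to each gluing. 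The key point is that Lemma \ref{lemma1} contributes a term of the form $l_{Z_t}\,d^{+}(\text{attachment vertex}) + p_{t}\,d^{+}(u_t)$, and the first tree contributes $d^{+}(i_t)$ while the second contributes $d^{+}(i_t+j)$; these agree by hypothesis, while the $TW(Z_t)$ and $d^{+}(u_t)$ contributions are identical on both sides.

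The main obstacle I anticipate is keeping track of how $d^{+}$ values interact across successive attachments: once $Z_1,\dots,Z_{t-1}$ have been glued on, the quantity $d^{+}(i_t)$ computed inside the enlarged tree differs from its value in the original $T$, because new pendent vertices have been introduced. I would need to verify that this ``updated'' $d^{+}(i_t)$ still equals the corresponding updated $d^{+}(i_t+j)$. The cleanest way to handle this is to observe that each $Z_a$ is attached at the \emph{same} vertex ($i_a$ in one tree, $i_a+j$ in the other, but with the identical shift $j$), so the contribution of the pendent vertices of $Z_a$ to the distance sum at $i_t$ mirrors exactly their contribution at $i_t+j$; this requires the structural assumption, implicit in the figure, that the shift $i_a \mapsto i_a+j$ is a symmetry-preserving relabeling so that the pairwise distances $d(i_t,i_a)$ equal $d(i_t+j,i_a+j)$.

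An alternative, and perhaps more transparent, route is to avoid induction and instead apply Lemma \ref{lemma1} $t$ times in a single expansion, writing $TW(T_1)$ and $TW(T_2)$ each as $TW(T)+\sum_{a=1}^{t}TW(Z_a)$ plus cross terms, and then subtracting. In the difference $TW(T_1)-TW(T_2)$, all terms not involving the attachment vertices cancel, and what remains is a sum $\sum_{a=1}^{t} p_{Z_a}\bigl(d^{+}(i_a)-d^{+}(i_a+j)\bigr)$ (together with interaction terms between distinct $Z_a$'s that also cancel under the shift hypothesis). The hypothesis $d^{+}(i_a)=d^{+}(i_a+j)$ then forces each summand to vanish, giving $TW(T_1)=TW(T_2)$. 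I would present this direct computation as the core of the proof, since it makes the role of the condition $d^{+}(i_a)=d^{+}(i_a+j)$ completely explicit and parallels the structure of the proofs of Theorems \ref{thm2} and \ref{thm3}.
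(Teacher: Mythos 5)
Your second, direct-expansion argument is exactly the paper's proof: it applies Lemma \ref{lemma1} to write $TW(T_{1})=TW(T)+\sum_{a}TW(Z_{a})+\sum_{a}\left(p_{z_{a}}d^{+}(i_{a})+p_{t}d^{+}(u_{a})\right)$ and likewise for $T_{2}$ with $d^{+}(i_{a}+j)$ in place of $d^{+}(i_{a})$, so the hypothesis $d^{+}(i_{a})=d^{+}(i_{a}+j)$ for all $a$ gives $TW(T_{1})=TW(T_{2})$; the inductive framing you begin with is just a repackaging of the same computation. If anything, you are more careful than the paper: the interaction terms between pendent vertices of distinct $Z_{a}$'s, and the implicit structural requirement $d(i_{a},i_{b})=d(i_{a}+j,i_{b}+j)$ needed for them to cancel (which you correctly identify as an assumption hidden in the figure, the $i_{a}$ being shifted uniformly along a path), are silently omitted from the paper's equations (\ref{eq:11})--(\ref{eq:12}), as is the fact that the quantities $p_{t}$ and $d^{+}(i_{a})$ are taken in the original $T$ rather than updated after each gluing.
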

\begin{figure}[H]
\begin{center}
\begin{pspicture}[showgrid=false](0,0)(14,5)
\psline(2,4)(1,5)
\psline(2,4)(1,3)
\psline(1,3)(1,5)
\psline(2,4)(12,4)
\psline(13,5)(12,4)
\psline(13,3)(12,4)
\psline(13,3)(13,5)

\rput[bl](1.25,4){X}
\rput[bl](12.4,4){Y}
\rput[bl](1.25,1){X}
\rput[bl](12.4,1){Y}
\rput[bl](3.75,4.5){$Z_{1}$}
\rput[bl](6.75,4.5){$Z_{2}$}
\rput[bl](3.75,1.5){$Z_{1}$}
\rput[bl](6.75,1.5){$Z_{2}$}
\rput[bl](9.75,4.5){$Z_{t}$}
\rput[bl](9.75,1.5){$Z_{t}$}

\rput[bl](2,3.6){$1$}
\rput[bl](2,0.6){$1$}
\rput[bl](3.7,0.5){$i_{1}+j$}
\rput[bl](6.7,.5){$i_{2}+j$}
\rput[bl](4,3.5){$i_{1}$}
\rput[bl](7,3.5){$i_{2}$}
\rput[bl](3.8,4.2){$u_{1}$}
\rput[bl](3.8,1.2){$u_{1}$}
\rput[bl](6.8,1.2){$u_{2}$}
\rput[bl](6.8,4.2){$u_{2}$}
\rput[bl](9.8,4.2){$u_{t}$}
\rput[bl](9.8,1.2){$u_{t}$}
\rput[bl](2.75,3.7){$.$}
\rput[bl](3,3.7){$.$}
\rput[bl](3.25,3.7){$.$}

\rput[bl](5.25,3.7){$.$}
\rput[bl](5.5,3.7){$.$}
\rput[bl](5.75,3.7){$.$}

\rput[bl](8.25,3.7){$.$}
\rput[bl](8.5,3.7){$.$}
\rput[bl](8.75,3.7){$.$}

\rput[bl](11.8,.6){$k$}
\rput[bl](11.8,3.6){$k$}

\rput[bl](11,3.7){$.$}
\rput[bl](11.25,3.7){$.$}
\rput[bl](11.5,3.7){$.$}

\rput[bl](11,.7){$.$}
\rput[bl](11.25,.7){$.$}
\rput[bl](11.5,.7){$.$}

\rput[bl](2.75,.7){$.$}
\rput[bl](3,.7){$.$}
\rput[bl](3.25,.7){$.$}

\rput[bl](5.25,.7){$.$}
\rput[bl](5.5,.7){$.$}
\rput[bl](5.75,.7){$.$}

\rput[bl](8.25,.7){$.$}
\rput[bl](8.5,.7){$.$}
\rput[bl](8.75,.7){$.$}

\rput[bl](9.8,0.5){$i_{t}+j$}
\rput[bl](9.8,3.5){$i_{t}$}

\cnode*(2,4){.1}{A}
\cnode*(12,4){.1}{A}

\cnode*(2,1){.1}{A}
\cnode*(12,1){.1}{A}
\psline(2,1)(1,2)
\psline(2,1)(1,0)
\psline(1,0)(1,2)
\psline(2,1)(12,1)
\psline(13,0)(12,1)
\psline(13,2)(12,1)
\psline(13,0)(13,2)

\psline(4,4)(5,5)
\psline(4,4)(3,5)
\psline(3,5)(5,5)
\psline(7,1)(6,2)
\psline(7,1)(8,2)
\psline(6,2)(8,2)

\cnode*(4,1){.1}{A}
\cnode*(4,4){.1}{A}
\cnode*(7,1){.1}{A}
\cnode*(7,4){.1}{A}
\psline(7,4)(6,5)
\psline(8,5)(6,5)
\psline(8,5)(7,4)
\cnode*(10,4){.1}{A}
\cnode*(10,1){.1}{A}
\psline(9,5)(10,4)
\psline(11,5)(10,4)
\psline(9,5)(11,5)
\psline(4,1)(3,2)
\psline(4,1)(5,2)
\psline(5,2)(3,2)
\psline(10,1)(9,2)
\psline(10,1)(11,2)
\psline(11,2)(9,2)
\end{pspicture}
\end{center}
\caption{Two  trees equiseparable w.r.t  terminal   Wiener index. }\label{fig.5.}
\end{figure}
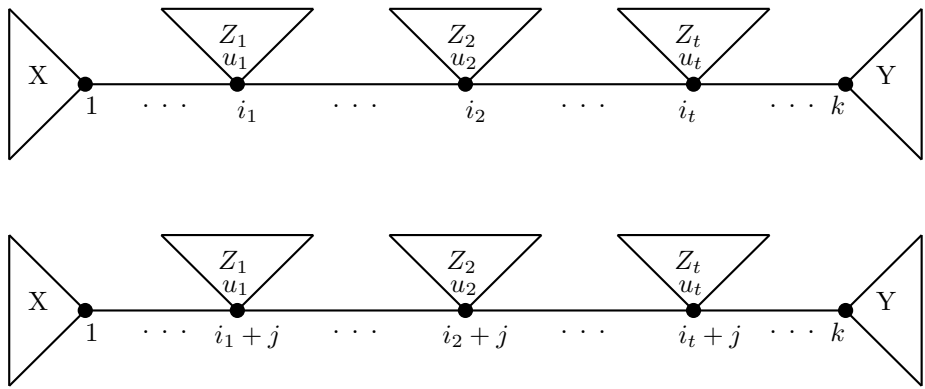
\begin{proof}
By lemma \ref{lemma1}, we get
\begin{equation}\label{eq:11}
TW(T_{1})=TW(T)+\sum\limits_{i =1}^{t}TW(Z_{i})+\sum\limits_{1 \leq a \leq  t} (p_{z_{a}}d^{+}(i_{a})+p_{t}d^{+}(u_{a}))
\end{equation}
 and 
\begin{equation}\label{eq:12}
TW(T_{2})=TW(T)+\sum\limits_{i =1}^{t}TW(Z_{i})+\sum\limits_{1 \leq a \leq  t} (p_{z_{a}}d^{+}(i_{a}+j)+p_{t}d^{+}(u_{a})).
\end{equation}
for some $j$.
For $TW(T_{1})=TW(T_{2})$ it must be true that 
$d^{+}(i_{a})=d^{+}(i_{a}+j)$ for every $1 \leq a \leq  t $ and for some $j$.
\end{proof}

\section {Large families of   trees with same  terminal  Wiener index}
Let $P_{2k}$ be a path on $2k$ vertices labelled as $v_{1},v_{2}...v_{2k}$. Let $X$ and $Y$ be two different trees with same number of vertices. Let $s \in V(X)$ and $t \in V(Y)$ such that $p_{x}-p_{s}=p_{y}-p_{t}$ where $p_{x}$   and $p_{y}$ denote the number of pendent vertices in $X$ and $Y$ respectively. $p_{s}=1$ if $s$ is a pendent vertex of $X$; otherwise it is equal to $0$. $p_{t}$ is defined similar to $p_{s}$. We can construct  a set STW(T,X,Y) of trees equiseparable with respect to  terminal  Wiener index as follows.\\
Each element of the set is obtained from  a copy of $T$, $k$ copies of $X$ and $k$ copies of $Y$. Fragments  of $X$ are attached(via their vertices $s$) to $k$ among the vertices $v_{1},v_{2}...v_{2k}$ of $T$. Fragments  of $Y$ are attached(via their vertices $t$)  to the remaining $k$   vertices among $v_{1},v_{2}...v_{2k}$ of $T$.\\

\begin{thm}\label{thm6}
The trees in the set $STW(T,X,Y)$ have the same terminal Wiener index.
\end{thm}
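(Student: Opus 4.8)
The plan is to evaluate $TW(G)$ directly from the definition for an arbitrary member $G$ of $STW(T,X,Y)$ and to show that the value does not depend on how the fragments are distributed along the path. A member $G$ is fixed by choosing the $k$ path vertices among $v_1,\dots,v_{2k}$ that carry an $X$-fragment; the remaining $k$ carry a $Y$-fragment. First I would record two structural facts. No vertex of the path $T=P_{2k}$ survives as a pendent vertex of $G$: after identifying $v_i$ with $s$ (or with $t$) its degree is at least $2$, so every pendent vertex of $G$ lies inside exactly one attached fragment. Moreover, attaching at $s$ removes $s$ from the pendent set exactly when $s$ is pendent in $X$, so each $X$-fragment contributes exactly $p_x-p_s$ pendent vertices and each $Y$-fragment exactly $p_y-p_t$; the hypothesis $p_x-p_s=p_y-p_t$ makes these equal to a common value $L$. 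This is the first and decisive place the hypothesis is used.

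Next I would split $TW(G)=\sum_{\text{leaf pairs}}d$ into intra-fragment and inter-fragment pairs. For two pendent vertices in the same fragment the connecting path stays inside that rigidly attached fragment, so each $X$-fragment contributes a fixed amount $A_X$ (the sum of pairwise distances among its $L$ surviving leaves, a quantity of $X$ and $s$ alone) and each $Y$-fragment a fixed $A_Y$; since there are always $k$ fragments of each type, this block totals $kA_X+kA_Y$, independent of the arrangement.

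For the inter-fragment pairs, if $p,q$ lie in the fragments attached at $v_i,v_j$ with $i\ne j$, uniqueness of paths in a tree forces $d(p,q)=d(p,v_i)+|i-j|+d(q,v_j)$, where $|i-j|=d_T(v_i,v_j)$. Writing $\sigma_i$ for the sum of the $L$ distances from $v_i$ to the leaves of its own fragment — which equals $d^{+}(s)$ in $X$ or $d^{+}(t)$ in $Y$ according to its type (the $s\to s$ or $t\to t$ term being zero) — the inter-fragment block collapses to $L\sum_{i<j}(\sigma_i+\sigma_j)+L^2\sum_{i<j}(j-i)=L(2k-1)\sum_{i}\sigma_i+L^2\sum_{i<j}(j-i)$. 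Here $\sum_{i<j}(j-i)$ is a pure combinatorial constant of the labelled path, and $\sum_i\sigma_i=k\,d^{+}(s)+k\,d^{+}(t)$ because exactly $k$ fragments are of each type; both quantities are insensitive to the arrangement. Adding the two blocks yields a formula for $TW(G)$ in which nothing depends on the chosen subset, proving the theorem.

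I expect the only real care to lie in the bookkeeping of the pendent-correction terms $p_s,p_t$: verifying in all four sub-cases (whether or not $s$, resp.\ $t$, is pendent) that each fragment indeed keeps $L$ leaves and that $\sigma_i$ coincides with $d^{+}(s)$ or $d^{+}(t)$. Everything else is routine counting. One could instead reach the same conclusion by iterating Lemma \ref{lemma1} over the $2k$ identifications, but the direct leaf-pair decomposition makes the cancellation of the arrangement transparent and isolates exactly why both $p_x-p_s=p_y-p_t$ and the balanced count of $k$ copies of each type are needed.
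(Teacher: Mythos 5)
Your proof is correct, but it takes a genuinely different route from the paper's. The paper argues for the stronger property that any two members $T'$, $T''$ of $STW(T,X,Y)$ are equiseparable in the sense of the definition in Section 1: it matches corresponding edges and claims $p_{u}(e_{i}|T')=p_{u}(e_{i}|T'')$ for every edge --- trivially for edges inside the $X$- and $Y$-fragments, and for the path edges $v_{i}v_{i+1}$ because each of the $i$ fragments hanging on the $v_{i}$-side contributes the same number $p_{x}-p_{s}=p_{y}-p_{t}$ of pendent vertices whatever its type; equal terminal Wiener index then follows from the edge-cut formula $TW=\sum_{e=uv}p_{u}(e)p_{v}(e)$ for pendent-vertex counts. (Incidentally, the paper's stated value $p_{u}(v_{i}v_{i+1}|T')=ik^{2}(2k-1)$ cannot be a pendent-vertex count --- it should be $i(p_{x}-p_{s})$ --- but the matching argument survives this.) You instead evaluate $TW$ directly from the leaf-pair definition, splitting into intra- and inter-fragment pairs and collapsing the latter via $d(p,q)=d(p,v_{i})+|i-j|+d(q,v_{j})$ into the closed form $kA_{X}+kA_{Y}+L(2k-1)\,k\bigl(d^{+}(s)+d^{+}(t)\bigr)+L^{2}\sum_{i<j}(j-i)$, which visibly does not depend on which vertices carry which fragment; your case bookkeeping is sound, since only the identified vertex can lose pendency, so each fragment retains exactly $L=p_{x}-p_{s}=p_{y}-p_{t}$ leaves and $\sigma_{i}$ equals $d^{+}(s)$ or $d^{+}(t)$ because the possible $d(s,s)$ term vanishes. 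As for what each approach buys: the paper's edge-matching delivers equiseparability itself, hence equality of any invariant determined by the multiset of $p_{u}$-values, while your computation yields an explicit formula for the common value of $TW$, makes transparent exactly where the hypothesis $p_{x}-p_{s}=p_{y}-p_{t}$ enters, and shows the assumption that $X$ and $Y$ have the same number of vertices is not needed for the terminal Wiener conclusion.
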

\begin{proof}
Let $T'$  and $T''$ be two trees belonging to the set $STW(T,X,Y)$. It is easy to see that $p_{u}(e_{i}|T')=p_{u}(e_{i}|T'')$  for corresponding edges  of $X$ and $Y$ in $T'$ and $T''$ respectively. Considering the corresponding edges of $P_{2k}$ in $T'$ and $T''$, we can see that
\begin{flushleft}  
$p_{u}(v_{i}v_{i+1}|T')=p_{u}(v_{i}v_{i+1}|T'')=ik^{2}(2k-1).$
\end{flushleft}  
Thus, for every pair of corresponding edges  $p_{u}(e_{i}|T')=p_{u}(e_{i}|T'')$ is true.
\end{proof}
Fig. \ref{fig.6.}  shows two trees $T_{5}$ and $T_{6}$ constructed from trees $T$,$X$ and $Y$ using the above method. Here $k=2,p_{s}=1$ and $p_{t}=0$.
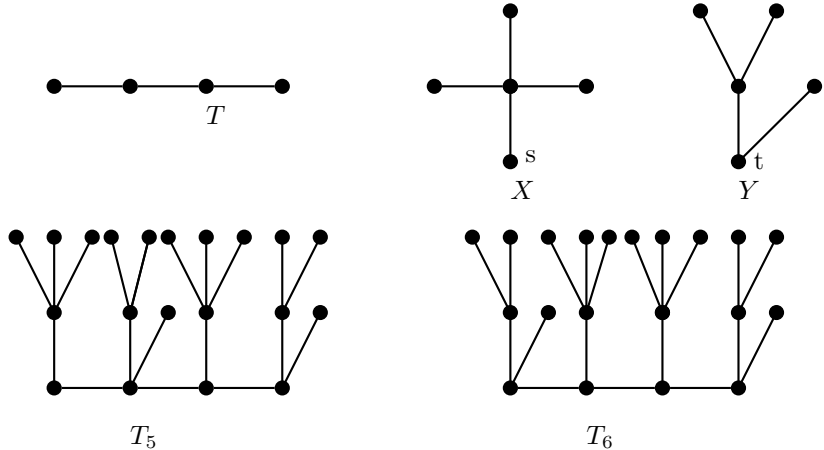
\begin{figure}[H]
\begin{center}
\begin{pspicture}[showgrid=false](0,-3)(11,3)
\cnode*(1,2){.1}{A}
\psline(1.1,2)(1.9,2)
\cnode*(2,2){.1}{A}
\psline(2.1,2)(2.9,2)
\cnode*(3,2){.1}{A}
\psline(3.1,2)(3.9,2)
\cnode*(4,2){.1}{A}

\cnode*(1,-2){.1}{A}
\psline(1.1,-2)(1.9,-2)
\cnode*(2,-2){.1}{A}
\psline(2.1,-2)(2.9,-2)
\cnode*(3,-2){.1}{A}
\psline(3.1,-2)(3.9,-2)
\cnode*(4,-2){.1}{A}
\cnode*(3,-1){.1}{A}
\psline(3,-2)(3,-1)
\cnode*(1,-1){.1}{A}
\psline(1,-2)(1,-1)
\cnode*(1,0){.1}{A}
\psline(1,-1)(1,0)
\cnode*(1.5,0){.1}{A}
\psline(1,-1)(1.5,0)
\cnode*(.5,0){.1}{A}
\psline(1,-1)(.5,0)
\cnode*(2,-1){.1}{A}
\psline(2,-1)(2,-2)
\cnode*(1.75,0){.1}{A}
\cnode*(2.25,0){.1}{A}
\psline(2,-1)(1.75,0)
\psline(2,-1)(2.25,0)
\psline(2,-1)(2.25,0)
\cnode*(2.5,-1){.1}{A}
\psline(2,-2)(2.5,-1)
\cnode*(3,0){.1}{A}
\psline(3,-1)(3,0)
\cnode*(2.5,0){.1}{A}
\cnode*(4,0){.1}{A}
\psline(2.5,0)(3,-1)
\psline(4,-1)(4,0)
\cnode*(4,-1){.1}{A}
\cnode*(4.5,-1){.1}{A}
\psline(4,-2)(4,-1)
\psline(4,-2)(4.5,-1)
\cnode*(3.5,0){.1}{A}
\cnode*(4.5,0){.1}{A}
\psline(3.5,0)(3,-1)
\psline(4.5,0)(4,-1)

\cnode*(7,-2){.1}{A}
\psline(7.1,-2)(7.9,-2)
\cnode*(8,-2){.1}{A}
\psline(8.1,-2)(8.9,-2)
\cnode*(9,-2){.1}{A}
\psline(9.1,-2)(9.9,-2)
\cnode*(10,-2){.1}{A}
\cnode*(9,-1){.1}{A}
\psline(9,-2)(9,-1)

\psset{linestyle=solid}
\cnode*(6,2){0.10}{A}
\psline(6.1,2)(6.9,2)
\cnode*(7,2){.1}{A}
\psline(7.1,2)(7.9,2)
\cnode*(8,2){.1}{A}

\cnode*(9.5,3){.1}{A}
\cnode*(10.5,3){.1}{A}
\psline(10,2)(9.5,3)
\psline(10,2)(10.5,3)

\cnode*(7,3){.1}{A}
\cnode*(7,2){.1}{A}
\cnode*(10,2){.1}{A}
\cnode*(7,1){.1}{A}
\psline(7,1)(7,2)
\psline(7,2)(7,3)
\rput[bl](7.2,1){s}
\rput[bl](8,-2.8){$T_{6}$}
\cnode*(10,1){.1}{A}
\psline(10,1)(11,2)
\psline(10,1)(10,2)
\cnode*(11,2){.1}{A}
\cnode*(7,-1){.1}{A}
\psline(7,-1)(7,-2)
\cnode*(7.5,-1){.1}{A}
\psline(7.5,-1)(7,-2)
\cnode*(6.5,0){.1}{A}
\cnode*(7,0){.1}{A}
\psline(7,-1)(6.5,0)
\psline(7,-1)(7,0)
\cnode*(8,-1){.1}{A}
\psline(8,-1)(8,-2)
\cnode*(7.5,0){.1}{A}
\cnode*(8,0){.1}{A}
\cnode*(8.3,0){.1}{A}
\psline(8,-1)(7.5,0)
\psline(8,-1)(8,0)
\psline(8,-1)(8.3,0)
\cnode*(9,-1){.1}{A}
\cnode*(9,0){.1}{A}
\psline(9,-1)(9,0)
\cnode*(8.6,0){.1}{A}
\psline(9,-1)(9.5,0)
\psline(9,-1)(8.6,0)
\cnode*(10,-1){.1}{A}
\cnode*(10.5,-1){.1}{A}
\psline(10,-2)(10,-1)
\psline(10,-2)(10.5,-1)
\cnode*(10,0){.1}{A}
\cnode*(9.5,0){.1}{A}
\cnode*(10.5,0){.1}{A}
\psline(10,-1)(10,0)
\psline(10,-1)(10.5,0)
\rput[bl](3,1.5){$T$}
\rput[bl](7,.5){$X$}
\rput[bl](10,.5){$Y$}

\rput[bl](10.2,.9){t}
\rput[bl](2,-2.8){$T_{5}$}

\end{pspicture}
\end{center}
\caption{Two  trees equiseparable with respect to   terminal Wiener index. } \label{fig.6.}
\end{figure}

We use the notation  $T.T'(u,v)$ to denote the tree obtained  from  $T$ and $T'$ by identifying the vertices  $u$  and $v$.
\begin{thm}\label{thm7}
Let $T$ be a tree with two  vertices $u$ and $v$ such that $d^{+}(u)=d^{+}(v)$. Let $T'$ be another tree with a   vertex $x$. Then $T.T'(u,x)$ and $T.T'(v,x)$ are equiseparable w.r.t terminal Wiener index.
\end{thm}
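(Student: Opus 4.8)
The plan is to follow the template of Theorem~\ref{thm3}: expand the terminal Wiener index of each of the two glued trees by a single application of Lemma~\ref{lemma1}, and observe that the two expansions agree in every term except one, which is controlled by $d^{+}(u)$ versus $d^{+}(v)$.

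First I would fix notation, writing $l_{1}$ for the number of pendent vertices of $T$ and $l_{2}$ for the number of pendent vertices of $T'$, and assuming (as the accompanying figures for the neighbouring theorems tacitly do) that $u$, $v$ and $x$ are nonpendent, so that the hypotheses of Lemma~\ref{lemma1} are met. Treating $T$ as the first tree and $T'$ as the second, the lemma gives
\begin{equation*}
TW(T.T'(u,x))=TW(T)+TW(T')+l_{2}\,d^{+}(u)+l_{1}\,d^{+}(x)
\end{equation*}
for the first identification and
\begin{equation*}
TW(T.T'(v,x))=TW(T)+TW(T')+l_{2}\,d^{+}(v)+l_{1}\,d^{+}(x)
\end{equation*}
for the second. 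The point to emphasise is that $TW(T)$, $TW(T')$ and $d^{+}(x)$ are intrinsic to the unmodified trees: $d^{+}(u)$ and $d^{+}(v)$ are distances measured inside $T$, while $d^{+}(x)$ is measured inside $T'$, so the latter does not depend on whether $T'$ is attached at $u$ or at $v$.

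The key step is then immediate: subtracting the two displays cancels everything except $l_{2}\,(d^{+}(u)-d^{+}(v))$, which is $0$ by the hypothesis $d^{+}(u)=d^{+}(v)$, so the two trees carry the same terminal Wiener index. There is no computational obstacle here; the only matters requiring a line of care are confirming that the chosen vertices are nonpendent (so that Lemma~\ref{lemma1} applies verbatim) and keeping track of which tree each $d^{+}(\cdot)$ is evaluated in. I would close by remarking that this establishes equality of terminal Wiener indices in exactly the sense in which the sister theorems use the word ``equiseparable'' -- indeed the statement is essentially a renaming of Theorem~\ref{thm3}, with $T'$, $x$, $u$, $v$ here playing the roles of $Z$, its attachment vertex, and $i$, $j$ there. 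Should the stronger edge-labelling condition of the Definition be required, one would instead match corresponding edges of the $T$-part and of the $T'$-part across the two trees, as is carried out explicitly in the proof of Theorem~\ref{thm6}.
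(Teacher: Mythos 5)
Your proof is correct and follows essentially the same route as the paper, whose entire argument is the single line ``substituting $T_{a}=T$, $T_{b}=T'$ and $d^{+}(u)=d^{+}(v)$ in Lemma~\ref{lemma1} we get the result''; you simply write out the two instances of the lemma and the cancellation explicitly. Your added remarks --- that $u$, $v$, $x$ should be nonpendent for Lemma~\ref{lemma1} to apply verbatim, and that $d^{+}(x)$ is intrinsic to $T'$ --- are careful refinements of the same argument, not a different approach.
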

\begin{proof}
Substituting   $T_{a}=T$, $T_{b}=T'$ and $d^{+}(u)=d^{+}(v)$ in lemma \ref{lemma1} we get  the result.
\end{proof}
\begin{figure}[H]
\begin{center}
\begin{pspicture}[showgrid=false](-1,-1)(11,3)
\cnode*(0,2){0.10}{A}
\cnode*(-1,2){.1}{A}
\psline(-1,2)(0,2)
\psline(0.1,2)(.9,2)
\cnode*(1,2){.1}{A}
\psline(1.1,2)(1.9,2)
\cnode*(2,2){.1}{A}
\psline(2.1,2)(2.9,2)
\cnode*(3,2){.1}{A}
\psline(3.1,2)(3.9,2)
\cnode*(4,2){.1}{A}
\psline(4.1,2)(4.9,2)
\cnode*(5,2){.1}{A}
\cnode*(8,0){.1}{A}
\psline(8,1)(8,0)
\cnode*(3,1){.1}{A}
\psline(2,2)(2,1)

\cnode*(.5,1){.1}{A}
\psline(1,2)(.5,1)

\cnode*(1,1){.1}{A}
\cnode*(2,1){.1}{A}
\psline(1,1.9)(1,1.1)
\psline(3,1.9)(3,1.1)
\psset{linestyle=solid}
\cnode*(6,2){0.10}{A}
\psline(6.1,2)(6.9,2)
\cnode*(7,2){.1}{A}
\psline(7.1,2)(7.9,2)
\cnode*(8,2){.1}{A}
\psline(8.1,2)(8.9,2)
\cnode*(9,2){.1}{A}
\psline(9.1,2)(9.9,2)
\cnode*(7,0){.1}{A}
\cnode*(7,2){.1}{A}
\cnode*(9,0){.1}{A}
\cnode*(10,2){.1}{A}
\cnode*(8,1){.1}{A}
\psline(8,1.1)(8,1.9)
\psline(7.05,0)(7.95,1)
\rput[bl](1,1.7){u}
\rput[bl](8,2.15){v}
\rput[bl](2,-1){$T_{7}$}
\psline(8.05,1)(8.95,0)
\cnode*(9,1){.1}{A}
\psline(9,2)(9,1)
\cnode*(11,2){.1}{A}
\psline(10,2)(11,2)
\rput[bl](8.1,1){w}
\rput[bl](8,-1){$T_{8}$}
\end{pspicture}
\end{center}
\caption{Two nonisomorphic trees with the same  terminal Wiener index. } \label{fig.7.}
\end{figure}
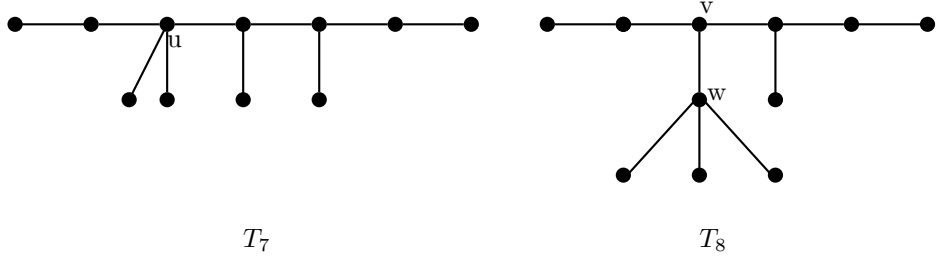
In $T_{8}$ of Fig. \ref{fig.7.}, we can note that $d^{+}(w)=d^{+}(v)=13$. Therefore by theorem \ref{thm7} we can  construct an   infinite number of non-isomorphic tree pairs  with  the same terminal Wiener index.\\

\begin{thm}\label{thm8}
Let $T_{1}$ and $T_{2}$  be two nonisomorphic trees with same terminal Wiener index. Let    $u \in V(T_{1})$ and $v\in V(T_{2})$ be two nonpendent vertices such that $d^{+}(u)=d^{+}(v)$. Let $T'$ be another tree with a   vertex $x$. Then $T_{1}.T'(u,x)$ and $T_{2}.T'(v,x)$ are nonisomorphic trees with  same terminal Wiener index.
\end{thm}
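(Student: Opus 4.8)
The plan is to apply Lemma \ref{lemma1} to both composite trees and show that the two terminal Wiener indices coincide term-by-term, then separately argue that the resulting trees cannot be isomorphic. The equality of terminal Wiener indices is the routine half, and it reduces almost immediately to the hypotheses; the genuine obstacle is establishing nonisomorphism, which the statement asserts but which does not follow from the index computation alone.

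First I would handle the index equality. Since $u$ and $v$ are nonpendent, Lemma \ref{lemma1} applies directly to each identification. Writing $l'$ for the number of pendent vertices of $T'$ and $l_1,l_2$ for the numbers of pendent vertices of $T_1,T_2$, Lemma \ref{lemma1} gives
\begin{equation}\label{eq:thm8a}
TW(T_{1}.T'(u,x))=TW(T_{1})+TW(T')+l'\,d^{+}(u)+l_{1}\,d^{+}(x)
\end{equation}
and
\begin{equation}\label{eq:thm8b}
TW(T_{2}.T'(v,x))=TW(T_{2})+TW(T')+l'\,d^{+}(v)+l_{2}\,d^{+}(x).
\end{equation}
By hypothesis $TW(T_1)=TW(T_2)$ and $d^{+}(u)=d^{+}(v)$, so the first three contributions in \eqref{eq:thm8a} and \eqref{eq:thm8b} match. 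The only remaining point is that the coefficients $l_1$ and $l_2$ of $d^{+}(x)$ agree; here one must observe that two trees of the same order with the same terminal Wiener index, sharing a vertex of equal $d^{+}$ value, carry the same number of pendent vertices (equiseparability in the sense of the Definition preserves the pendent count), so $l_1=l_2$ and the two indices are equal.

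The hard part will be the nonisomorphism claim, and I would treat it as the main obstacle precisely because it is \emph{not} implied by the equality of indices. My plan is to argue as follows: if $T_{1}.T'(u,x)$ and $T_{2}.T'(v,x)$ were isomorphic, one could strip off the common fragment $T'$ (identified at the shared vertex) and recover an isomorphism between $T_1$ and $T_2$ that maps $u$ to $v$, contradicting the assumption that $T_1$ and $T_2$ are nonisomorphic. Making this precise requires care, since in principle the gluing vertex need not be canonically recoverable from the glued tree; I would need either an additional mild hypothesis (for instance that $x$ is identifiable within $T'$, e.g. a cut vertex whose removal separates the $T_1$ or $T_2$ part) or an appeal to the specific structure indicated in the figures. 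Absent such a hypothesis the statement as written may require the phrase ``nonisomorphic'' to be read generically, and I would flag this gap rather than paper over it; the clean and fully rigorous route is to strengthen the hypothesis so that the copy of $T'$ is recognizable in the composite, after which the descent to an isomorphism $T_1\cong T_2$ is immediate and yields the contradiction.
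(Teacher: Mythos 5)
Your index computation is exactly the paper's proof: two applications of Lemma \ref{lemma1} followed by cancellation using $TW(T_{1})=TW(T_{2})$ and $d^{+}(u)=d^{+}(v)$. You also correctly spotted the point the paper glosses over --- the coefficient of $d^{+}(x)$ is $l_{1}$ in one equation and $l_{2}$ in the other, while the paper simply writes $l_{1}$ in both (tacitly assuming $T_{1}$ and $T_{2}$ have equally many pendent vertices) and declares that the result follows. However, your patch for this is a genuine error: equal terminal Wiener index does \emph{not} imply equal pendent count, even for trees of the same order possessing nonpendent vertices with equal $d^{+}$ value. Equiseparability in the sense of the Definition \emph{presupposes} equal numbers of pendent vertices; it is not a consequence of $TW(T_{1})=TW(T_{2})$. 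Concretely, on $11$ vertices take $T_{1}$ to be the double broom with two leaves at $c_{1}$, three leaves at $c_{2}$, and $d(c_{1},c_{2})=5$, and $T_{2}$ the spider with legs $(5,1,1,1,1,1)$: both have $TW=50$, but $l_{1}=5$ while $l_{2}=6$, and the second spine vertex $u$ of the broom and the vertex $v$ at distance $2$ from the spider's center both satisfy $d^{+}(u)=d^{+}(v)=18$. Gluing $T'=P_{3}$ at its center $x$ (so $d^{+}(x)=2$) yields composites with terminal Wiener indices $50+2+2\cdot 18+5\cdot 2=98$ and $50+2+2\cdot 18+6\cdot 2=100$. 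So no argument of the kind you sketch can close this gap: the theorem itself fails without the hypothesis $l_{1}=l_{2}$, and the only correct repair is to add that hypothesis explicitly (as the paper's notation silently does).

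On nonisomorphism your instincts are sound and, if anything, more honest than the source: the paper's proof establishes only the equality of indices and says nothing at all about why $T_{1}.T'(u,x)$ and $T_{2}.T'(v,x)$ are nonisomorphic. You are right that this does not follow from the index computation, and right that recovering an isomorphism $T_{1}\cong T_{2}$ by ``stripping off'' $T'$ is problematic because the gluing vertex need not be canonically recoverable; some strengthening of the hypotheses along the lines you indicate would be needed for a fully rigorous statement. In summary: your routine half coincides with the paper's entire proof; your attempt to be more careful identified two real gaps in the statement, but the auxiliary claim you invoked to fix the first one is false, and flagging both as requiring added hypotheses (rather than proving them) is the defensible position.
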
 
\begin{proof}
By lemma \ref{lemma1} 
\begin{align*}
TW(T_{1}.T'(u,x))&=TW(T_{1})+TW(T')+l_{2}d^{+}(u)+ l_{1}d^{+}(x)\hspace{1cm} \text{and}\\
TW(T_{2}.T'(v,x))&=TW(T_{2})+TW(T')+l_{2}d^{+}(v)+ l_{1}d^{+}(x)
\end{align*}
Since $TW(T_{1})=TW(T_{2})$ and $d^{+}(u)=d^{+}(v)$ the result follows.
\end{proof}

For both the  non-isomorphic trees $T_{7}$ and $T_{8}$ in Fig. \ref{fig.7.}, we  have both $TW(T_{7})$ = $TW(T_{8})$ = $57$. We note that  in both these trees, $d^{+}(u)=d^{+}(w)=13$. Therefore attaching  a path at  $u$ in  $T_{7}$ and $w$ in $T_{8}$ will give rise to  another  tree pair  with  the same terminal Wiener index. By using different paths  we can  construct an   infinite number of non-isomorphic tree pairs  with  the same terminal Wiener index.
\section{Conclusion}
In this work we  proposed some new methods to generate equiseparable trees w.r.t terminal Wiener index. Equiseparable trees are mainly used in  chemistry to identify molecules with similar properties. Our future work include the study of the following two classes of trees:Trees equiseparable w.r.t Wiener index but not equiseparable w.r.t terminal Wiener index and trees equiseparable w.r.t terminal Wiener index but not equiseparable w.r.t  Wiener index.\\

\end{document}